\def\th@mytheorem{%
  \normalfont 
  \thm@headfont{\itshape} 
  \thm@notefont{\normalfont\itshape}%
  \thm@preskip\parindent
  \thm@postskip\thm@preskip
  \thm@headpunct{.} 
}
\theoremstyle{mytheorem}
\newtheorem{theorem}{Theorem}
\def\p@subsection{}
\def\p@subsubsection{}
\newcommand\footnoteref[1]{\protected@xdef\@thefnmark{\ref{#1}}\@footnotemark}
\newcommand{\Dim}{d}   
\newcommand{\Sites}{N}  
\newcommand{\tot}{ {\rm tot} }
\def\id{\mathbbm{1}}   
\newcommand{\Hil}{\mathcal{H}}  
\newcommand{\JParen}{ {(j)} }
\newcommand*{\Set}[1]{\left\{  #1  \right\}}
\renewcommand\th{ {\rm th} }
\begin{document}

\title{Noncommuting charges can remove non-stationary quantum many-body dynamics}

\author{Shayan Majidy} 
\email{smajidy@fas.harvard.edu}
\affiliation{Department of Physics, Harvard University, Cambridge, Massachusetts 02139, USA}
\affiliation{Institute for Quantum Computing, University of Waterloo, Waterloo, Ontario N2L 3G1, Canada}

\date{\today}

\begin{abstract}
Studying noncommuting conserved quantities, or 'charges,' has revealed a conceptual puzzle: noncommuting charges hinder thermalization in some ways yet promote it in others. While many quantum systems thermalize according to the Eigenstate Thermalization Hypothesis (ETH), systems with 'dynamical symmetries' violate the ETH and exhibit non-stationary dynamics, preventing them from equilibrating, much less thermalizing.  We demonstrate that each pair of dynamical symmetries corresponds to a specific charge. We find that introducing new charges that do not commute with existing ones disrupts these symmetries, thereby eliminating non-stationary dynamics and facilitating thermalization. We illustrate this behavior across various models, including the Hubbard model and Heisenberg spin chains. Our findings demonstrate that noncommuting charges can enhance thermalization by reducing the number of local observables that thermalize according to the ETH.
\end{abstract}

\maketitle

\section{Introduction}\label{sec:introduction}


How can isolated quantum many-body systems thermalize? This question is largely answered by the eigenstate thermalization hypothesis (ETH)~\cite{srednicki1994chaos, deutsch1991quantum}. Consider a set of local observables $O_i$, an initial state $|\psi\rangle$ governed by a Hamiltonian $H$, and the corresponding expectation values $\langle O_i(t) \rangle$. A system is said to have thermalized at time $t$ if $\langle O_i(t) \rangle \approx \langle O_i \rangle_{\text{th}}:=\text{tr}[\rho_{\text{th}} O_i]$ for each $O_i$, where $\rho_{\text{th}}$ is the thermal state with the temperature determined by the energy of $|\psi\rangle$. Most quantum many-body systems thermalize according to the ETH~\cite{ueda2020quantum, rigol2008thermalization}. Integrable, superintegrable~\cite{fagotti2014conservation}, and many-body localized (MBL) systems~\cite{abanin2019colloquium}, for example, violate the ETH by evolving to stationary but non-thermal states. Systems with quantum scars~\cite{serbyn2021quantum} or dynamical symmetries~\cite{buvca2019non} defy the ETH with persistent non-stationary dynamics.


Most studies of quantum thermalization assume that systems' charges commute.  This assumption segments the system’s Hilbert space into `charge sectors,' which help in determining when a state thermalizes~\cite{d2016quantum} and its form. However, charges do not always commute. In fact, noncommutation of observables is a fundamental aspect of quantum theory, impacting uncertainty relations~\cite{sen2014uncertainty, coles2017entropic}, measurement disturbance~\cite{fewster2020quantum, polo2022detector}, and foundational tests~\cite{aspect1999bell, emary2013leggett, majidy2019exploration, majidy2021detecting}. Discarding this assumption leads to the breakdown of derivations of the thermal state's form~\cite{nyh2018beyond, nyh2016microcanonical, guryanova2016thermodynamics, lostaglio2017thermodynamic}, the ETH~\cite{murthy2022non}, Onsager relations~\cite{manzano2022non}, and other results~\cite{majidy2023noncommuting}.  Exploring the implications of noncommuting charges has spurred the development of a dynamic subfield, bridging thermodynamics and many-body physics~\cite{majidy2023noncommuting, majidy2024effects, nyh2016microcanonical, guryanova2016thermodynamics, lostaglio2017thermodynamic}.


A conceptual puzzle in studying noncommuting charges relates to their impact on thermalization. Recent results suggests that noncommuting charges may inhibit thermalization by reducing entropy production rates~\cite{upadhyaya2023happens, manzano2018Squeezed}, imposing more stringent constraints on unitary operations compared to commuting charges~\cite{marvian2022restrictions}, and causing larger deviations from the thermal state under physically plausible conditions~\cite{murthy2022non}. However, there is also evidence that noncommuting charges could facilitate thermalization. For instance, they are known to increase average entanglement levels~\cite{majidy2023non, majidy2023critical}, crucial for thermalization in closed quantum systems, and destabilize MBL~\cite{potter2016symmetry}.


This puzzle is theoretically compelling and has potential implications for quantum technologies. The primary challenge in scaling quantum computers is mitigating decoherence~\cite{majidy2024building, preskill1998reliable, gottesman2010introduction, campbell2017roads}, which is closely related to thermalization~\cite{popovic2023thermodynamics}. Discovering controllable obstacles to thermalization may imply obstacles to certain forms of decoherence.  Systems with noncommuting charges, like spin systems~\cite{nyh2020noncommuting, nyh2022how, murthy2022non, majidy2023non} and squeezed states~\cite{manzano2022non, manzano2018Squeezed}, are already used in quantum computing methods, like quantum dots~\cite{loss1998quantum} and optical techniques~\cite{zhong2020quantum}. This potential is further supported by recent advances showing that non-Abelian symmetric operations are universal for quantum computing~\cite{rudolph2023two, freedman2021symmetry}.


A promising approach to solving this puzzle is to introduce noncommuting charges into systems that initially resist thermalization and monitor the resulting changes. This approach was effectively used in demonstrating that noncommuting charges destabilize MBL~\cite{potter2016symmetry}. In this work, we move from non-thermalizing to non-stationary dynamics by introducing noncommuting charges into systems with dynamical symmetries.  In this work we focus on local dynamics. 

Dynamical symmetries, proposed by Bu{\v{c}}a et al.~\cite{buvca2019non, medenjak2020isolated, medenjak2020rigorous, chinzei2020time}, are operators $A$ that satisfy $\comm{H}{A} = \lambda A$ with $\lambda \in \mathbb{R}$ and $\lambda \neq 0$, where $H$ is the system's Hamiltonian. Any local operator $\mathcal{O}_i$ overlapping with $A$ (i.e., $\tr[\mathcal{O}_iA] \neq 0$) will not thermalize, displaying non-stationary dynamics in violation of the ETH. Dynamical symmetries always come in pairs, exist in open systems~\cite{buvca2019non, tindall2020quantum, buvca2019dissipation}, and can be responsible for the non-stationary dynamics observed in quantum time crystals~\cite{medenjak2020isolated}, OTO crystals~\cite{buvca2022out}, and quantum attractors~\cite{buca2020quantum}. Dynamical symmetries depart from the conventional definition of `symmetry' as an invariance under a transformation. Therefore, Noether's theorem does not establish the link between dynamical symmetries and charges.

Our main result demonstrates that introducing noncommuting charges into a system can disrupt its existing dynamical symmetries. Starting with a system that possesses dynamical symmetries, we identify charges from the dynamical symmetries and then impose additional charges. If the added charge commutes with the existing physical charges, the dynamical symmetries remain intact. However, if the added charge does not commute with the existing charges, the original dynamical symmetries are eliminated. This work highlights a fundamental conflict between noncommuting charges and dynamical symmetries, revealing another mechanism by which noncommuting charges facilitate thermalization.


\section{Results}

\subsection{Set-up} \label{Sec:Background}

Consider a closed quantum system consisting of a lattice with $N$ sites. Each site corresponds to a Hilbert space $\Hil$ of finite dimensionality $\Dim$. The system is governed by a Hamiltonian $H = \sum_k E_k \dyad{\psi_k}$, where $\ket{\psi_k}$ are energy eigenstates with energies $E_k$. The time dependent state $\ket{\Phi(t)} =  \sum_k \exp(-iE_kt)c_k \ket{\psi_k}$ will have fixed total energy $E = \mel{\Phi(t)}{H}{\Phi(t)}$, where we set $\hbar = 1$. The expectation value of an observable $\mathcal{O}$ for the state $\ket{\Phi(t)}$ is
\begin{align}
    \expval{\mathcal{O}(t)} &= \sum_{j,k} e^{-i(E_k-E_j)t} c_j^*c_k \mel{\psi_j}{\mathcal{O}}{\psi_k} \label{eq:expvalop1}.
\end{align}
$\expval{\mathcal{O}}_{\rm{th}} := \tr[\rho_{\rm th}O]$ is the thermal expectation value where $\rho_{\rm th}$ is the thermal state with temperature fixed by $E$. If an out-of-equilibrium $\ket{\Phi(0)}$ thermalizes, we expect $\expval{\mathcal{O}(t)}$ to approach $\expval{\mathcal{O}}_{\rm{th}}$,
\begin{equation}
   \expval{\mathcal{O}(t \rightarrow \infty)} = \expval{\mathcal{O}}_{\rm{th}} + C(t),
\end{equation}
where $C(t)$ accounts for fluctuations. This second term is also what allows $\expval{\mathcal{O}(0)}$ to exhibit non-stationary dynamics. The ETH gives a set of conditions for which this expectation holds~\cite{srednicki1994chaos, deutsch1991quantum}.

Systems with dynamical symmetries violate the ETH~\cite{buvca2019non, medenjak2020rigorous}. Since dynamical symmetries result in time-dependent dynamics, they are violating the off-diagonal ETH. We denote a dynamical symmetry by a non-Hermitian operator $A$, such that 
\begin{equation}
    \comm{H}{A} = \lambda A,  
\end{equation}
where $\lambda$ is a real and non-zero constant. We assume that $A$ is extensive by restricting it to have the form $A = \sum_{j=1}^{N} \tilde{A}^\JParen,$ where $\tilde{A}^{\JParen}$ is an operator that acts non-trivially on one site and as the identity on all other sites. Furthermore, for every dynamical symmetry $A$, there exists another $A^{\dagger}$, $\comm{H}{A^{\dagger}}= - \lambda A^{\dagger}$. Thus, dynamical symmetries always come in pairs. A system can have multiple pairs of dynamical symmetries. In that case, we add subscripts as follows $\comm{H}{A_{\beta}} = \lambda_{\beta}A_{\beta}$. Furthermore, we define $A_{+\beta} := A_{\beta}$ and $A_{-\beta}:= A_{\beta}^{\dagger}$. The expectation value of an observable $\mathcal{O}$ that overlaps with one dynamical symmetry, $\tr[A \mathcal{O}] \neq 0$, will continue to change over time. As proven in Ref.~\cite{buvca2019non}, dynamical symmetries are sufficient conditions for non-stationary dynamics. For certain locally interacting systems, they are also necessary conditions~\cite{buvca2023unified}.

We illustrate dynamical symmetries with an example. Consider a collection of spin-$\frac{1}{2}$ particles in an external field. The system's Hamiltonian is $H = B\sum_{j}\sigma_z^{(j)}$ where $\sigma_{x,y,z}$ are the Pauli matrices and the superscript denotes the operator acts non-trivially on the $j$th site. Such a Hamiltonian has dynamical symmetries $A_{\pm} = \sum_{j} (\sigma_x^{(j)} \pm i \sigma_y^{(j)})$. The operator $\sigma_{x}^{(j)}$, for example, overlaps with $A_{\pm}$ and will thus experience non-stationary dynamics. The persistent recurrence from precessing spins prevents thermalization of $\sigma_{x}^{(j)}$. This is intuitive---an isolated spin in a magnetic field will continually process around the applied field. This very simple example also demonstrates the variation in the magnitude of the non-stationary dynamics and how it depends on the initial state. Consider one extreme case; all the spins are aligned and orthogonal to the external field. The magnitude of oscillations will be at its maximum. In the other extreme case, all the spins are aligned or anti-aligned with the external field. This is an eigenstate, and we thus have no dynamics. This is a trivial example of having ``non-thermalizing'' dynamics. 

Finally, we introduce noncommuting charges. Charges are Hermitian operators that commute with the Hamiltonian, $\comm{H}{Q} = 0$. A system can have multiple charges that we distinguish with Greek letter subscripts, and these charges can be noncommuting, $\comm{Q_{\alpha}}{Q_{\beta}} \neq 0$. We add a third condition to hermiticity and commutation for physically motivated reasons. Let $\tilde{Q}_\alpha$ denote a Hermitian operator defined on $\Hil$. We denote by $\tilde{Q}_\alpha^\JParen$ the $\tilde{Q}_\alpha$ defined on the $j^\th$ subsystem's $\Hil$. We denote an extensive observable 
\begin{align}
   Q_\alpha  \coloneqq  \sum_{j = 1}^{\Sites} \tilde{Q}_\alpha^\JParen \equiv   \sum_{j = 1}^\Sites \id^{\otimes (j - 1)}  \otimes \tilde{Q}_\alpha^\JParen  \otimes \id^{\otimes (\Sites - j) }. \label{eq:charges}
\end{align}
Our third condition is that the charges are extensive. While this work focuses on closed quantum systems, noncommuting charges and dynamical symmetries can also exist in open quantum systems \cite{zhang2020stationary, buvca2019non, tindall2020quantum, buvca2019dissipation}.

\subsection{Correspondence between charges and dynamical symmetries} \label{Sec:Correspondence}

In this subsection, we present a correspondence between the existence of charges and dynamical symmetries. This correspondence is in the form of two theorems. One theorem identifies a charge from pairs of dynamical symmetries. The other shows that for a class of commuting charges, one can always find a Hamiltonian that conserves those charges and has the corresponding dynamical symmetries. 


Charges $Q_{\alpha}$ that generate Lie algebras are important in physics because they describe conserved quantities such as angular momentum, particle number, electric charge, color charge, and weak isospin~\cite{majidy2023noncommuting, das2014lie,iachello06lie,gilmore12lie}, i.e., everything in the Standard Model of particle physics. The Lie algebras relevant to studying noncommuting charges are finite-dimensional because we study systems with a finite number of linearly independent charges~\cite{nyh2022how}. The algebras are also defined over the complex number because the operators are Hermitian. Finally, the algebras are semisimple so that the operator representation of the charges can be diagonalized (not necessarily simultaneously diagonalized) and permit a special basis that we introduce later \cite{humphreys2012introduction}. Many physically significant algebras satisfy these conditions, such as $\mathfrak{su}(N)$, $\mathfrak{so}(N)$, and all simple Lie algebras. From this point onward, we denote by $\mathfrak{g}$ a finite-dimensional semisimple complex Lie algebra. 

Essential to our study are Cartan-Weyl bases~\cite{humphreys2012introduction}, which we will now introduce. An algebra's dimension $c$ equals the number of generators in a basis for the algebra. The algebra's rank $r$ is the dimension of the algebra's maximal commuting subalgebra, the largest subalgebra in which all elements are commuting. For example, consider the usual basis for $\mathfrak{su}$(2)---the Pauli-operators $\Set{  \sigma_x,  \sigma_y,  \sigma_z  }$. There are three generators in this basis, so $c=3$, and none of these operators commute with one another, so $r=1$. A Cartan subalgebra is a maximal Abelian subalgebra consisting of semisimple elements, $h \in \mathfrak{h}$. From this point onward, we denote by $\mathfrak{h}$ a Cartan subalgebra of a $\mathfrak{g}$. Every $\mathfrak{g}$ will have a $\mathfrak{h}$. The $\mathfrak{g}$ definition includes a vector space $V$ defined over a field $F$. A form is a map $V \times V \rightarrow F$. The Killing form of operators $x, y \in \mathfrak{g}$ is the bilinear form $(x,y):= \tr({\rm ad}x \cdot {\rm ad}y)$ where ad$x$ is the image of $x$ under the adjoint representation of $\mathfrak{g}$. Let $\beta(h) \coloneqq (h_{\beta}',h)$ where $h, h_{\beta}' \in \mathfrak{h}$. $\beta(h)$ is a root of $\mathfrak{g}$ relative to $\mathfrak{h}$ if there exists a non-zero operator $X_{\beta} \in \mathfrak{g}$ such that 
\begin{equation}
    \comm{h}{X_{\beta}} = \beta(h) X_{\beta}. \label{eq:rootvectors}   
\end{equation}
These operators $X_{\beta}$ are called root vectors. Denote by $\Delta$ all roots of $\mathfrak{g}$ with respect to $\mathfrak{h}$. If $\beta \in \Delta$, then so is $-\beta$. Thus, root vectors always come in pairs $X_{\pm \beta}$. Finally, one can identify different Cartan-Weyl bases for each $\mathfrak{g}$. A Cartan-Weyl basis consists of a Cartan subalgebra $\mathfrak{h}$ and one or more pairs of root vectors $X_{\pm \beta}$. The choice of Cartan-Weyl is not unique.

We highlight additional features of root vectors which are important to this work. For all root vectors $\comm{X_{+\beta}}{X_{-{\beta}}} \in \mathfrak{h}$ (Proposition 7.17 of Ref.~\cite{hall2013lie}). $X_{\pm \beta}$ can always be chosen such that $X_{+\beta} = X_{-\beta}^{\dagger}$ (p.~273 of Ref.~\cite{campoamor2019group}). It follows then that 
\begin{equation}
    \left(\comm{X_{\beta}}{X_{-{\beta}}}\right)^{\dagger} = \comm{X_{\beta}}{X_{-{\beta}}}\label{eq:DS1}
\end{equation}
and that
\begin{equation}
    \comm{X_{+\beta}}{\comm{\mathcal{H}}{X_{-\beta}}}= \left(\comm{X_{-\beta}}{\comm{\mathcal{H}}{X_{+\beta}}}\right)^{\dagger}\label{eq:DS2}
\end{equation}
where $\mathcal{H}$ is any Hermitian operator. 

\begin{theorem} \label{th:charges}
    For every pair of dynamical symmetries $A_{\pm \beta}$ that a Hamiltonian has, there exists a charge $Q_{\beta} =  \comm{A_{+ \beta}}{A_{- \beta}}$. 
\end{theorem}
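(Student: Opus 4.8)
The plan is to verify that $Q_\beta \coloneqq \comm{A_{+\beta}}{A_{-\beta}}$ meets the three requirements of a charge stated in Sec.~\ref{Sec:Background}: Hermiticity, $\comm{H}{Q_\beta}=0$, and extensivity. Each is a short consequence of the defining relations $\comm{H}{A_{\pm\beta}} = \pm\lambda_\beta A_{\pm\beta}$, the pairing $A_{-\beta}=A_{+\beta}^\dagger$ built into the definition of a dynamical-symmetry pair, and the single-site form $A_{\pm\beta} = \sum_{j} \tilde{A}_{\pm\beta}^{\JParen}$ assumed throughout.

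First I would establish Hermiticity: taking the adjoint and using $(\comm{X}{Y})^\dagger = \comm{Y^\dagger}{X^\dagger}$ together with $A_{+\beta}^\dagger = A_{-\beta}$ gives $Q_\beta^\dagger = \comm{A_{+\beta}}{A_{-\beta}} = Q_\beta$, which is exactly Eq.~\eqref{eq:DS1} with $A_{\pm\beta}$ playing the role of $X_{\pm\beta}$. Next, for conservation, I would expand $\comm{H}{Q_\beta}$ with the Jacobi identity, $\comm{H}{\comm{A_{+\beta}}{A_{-\beta}}} = \comm{\comm{H}{A_{+\beta}}}{A_{-\beta}} + \comm{A_{+\beta}}{\comm{H}{A_{-\beta}}}$, substitute the two eigenvalue relations, and note that the resulting terms are $+\lambda_\beta\comm{A_{+\beta}}{A_{-\beta}}$ and $-\lambda_\beta\comm{A_{+\beta}}{A_{-\beta}}$, which cancel; hence $\comm{H}{Q_\beta}=0$.

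The only step with content beyond a one-line identity is extensivity. I would write the commutator as a double sum $\comm{A_{+\beta}}{A_{-\beta}} = \sum_{j,k}\comm{\tilde{A}_{+\beta}^{\JParen}}{\tilde{A}_{-\beta}^{\KParen}}$; since $\tilde{A}_{\pm\beta}^{\JParen}$ acts nontrivially only on site $j$, every term with $j\neq k$ vanishes, leaving $Q_\beta = \sum_{j}\comm{\tilde{A}_{+\beta}^{\JParen}}{\tilde{A}_{-\beta}^{\JParen}}$, a sum of operators each supported on a single site --- precisely the form of Eq.~\eqref{eq:charges} with $\tilde{Q}_\beta^{\JParen} = \comm{\tilde{A}_{+\beta}^{\JParen}}{\tilde{A}_{-\beta}^{\JParen}}$. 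I do not expect a genuine obstacle here: the proof is bookkeeping, and the locality assumption on $A_{\pm\beta}$ is exactly what makes the resulting conserved quantity an extensive local charge rather than merely a global invariant. The one caveat I would flag is that this argument does not by itself rule out $Q_\beta = 0$; I would either invoke the identification with the Cartan-type element $\comm{X_{+\beta}}{X_{-\beta}}\in\mathfrak{h}$ attached to a root $\beta$ (nonzero for any genuine root), or simply note that a vanishing $Q_\beta$ is a degenerate case of no interest here.
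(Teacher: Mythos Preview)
Your proposal is correct and follows essentially the same route as the paper: Jacobi identity for conservation, the adjoint relation $A_{-\beta}=A_{+\beta}^\dagger$ for Hermiticity, and the $1$-locality of $A_{\pm\beta}$ for extensivity. Your treatment of extensivity is in fact more explicit than the paper's (which simply asserts that $1$-local operators have a $1$-local commutator), and your caveat about the degenerate case $Q_\beta=0$ is a reasonable observation the paper does not address.
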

\begin{proof}
    For $Q_{\beta}$ to be a charge, it must be conserved by the Hamiltonian, a Hermitian operator, and extensive. Using the Jacobi identity, Eq.~\eqref{eq:DS1}, and Eq.~\eqref{eq:DS2}, we find that
    \begin{align}
        \comm{H}{\comm{A_{+\beta}}{A_{-\beta}}} &= \comm{A_{+\beta}}{\comm{H}{A_{-\beta}}} - \comm{A_{-\beta}}{\comm{H}{A_{+\beta}}}\\
        &= -\lambda_{\beta} \left(\comm{A_{+\beta}}{A_{-\beta}} + \comm{A_{-\beta}}{A_{+\beta}} \right)\\
        &= 0.
    \end{align}
    Thus, the Hamiltonian conserves $Q_{\beta}$. For $Q_{\beta}$ to be a charge, it must also be Hermitian, which it is
    \begin{align}
        Q_{\beta}^{\dagger}
        &= \left(A_{\beta}A_{\beta}^{\dagger} - A_{\beta}^{\dagger}A_{\beta} \right)^{\dagger}\\ &=  \left( A_{\beta}A_{\beta}^{\dagger} - A_{\beta}^{\dagger}A_{\beta} \right) = Q_{\beta}.
    \end{align}
   Finally, recall that we are studying dynamical symmetries, which are $k=1$ local. Thus, all $Q_{\beta} =  \comm{A_{+ \beta}}{A_{- \beta}}$ will also be $1$-local and are charges by our definition.
\end{proof}

We say a set of dynamical symmetries `produces' a set of charges $\{Q_{\beta}\}$ when the span of $\{Q_{\beta}\}$ equals the span of the set of charges found using Theorem \ref{th:charges}. A note on notation: We use the subscript $\beta$ when considering charges defined by dynamical symmetries (as in Theorem 1) and $\alpha$ otherwise.

\begin{theorem} \label{th:DS}
For every set of charges $\{Q_{\alpha}\}$ that form a $\mathfrak{h}$ of $\mathfrak{g}$, there exists a Hamiltonian $H$ conserving those charges such that the root vectors that complete the Lie algebra are dynamical symmetries of $H$.
\end{theorem}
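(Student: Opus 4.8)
The plan is to look for the Hamiltonian \emph{inside} the Cartan subalgebra itself. Write $r$ for the rank of $\mathfrak{g}$, so that $\{Q_\alpha\}_{\alpha=1}^{r}$ is a basis of $\mathfrak{h}$, and take $H := \sum_{\alpha=1}^{r} c_\alpha Q_\alpha$ with real coefficients $c_\alpha$ to be fixed at the end. Three of the required properties are then automatic. $H$ is Hermitian, being a real combination of the Hermitian charges $Q_\alpha$. $H$ conserves every charge, since $\comm{H}{Q_\alpha}=0$ because $\mathfrak{h}$ is Abelian. And $H$ is extensive: each $Q_\alpha=\sum_{j=1}^{N}\tilde Q_\alpha^{(j)}$, so $H=\sum_{j=1}^{N}\tilde H^{(j)}$ with the single-site operator $\tilde H:=\sum_\alpha c_\alpha\tilde Q_\alpha$. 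Everything thus reduces to choosing the $c_\alpha$ so that \emph{all} of the root vectors $X_{\pm\beta}$ that complete $\mathfrak{g}$ become dynamical symmetries of this $H$.

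To handle the root vectors I would first extend $\{\tilde Q_\alpha\}$ to a single-site Cartan--Weyl basis $\{\tilde Q_\alpha\}\cup\{\tilde X_{\pm\beta}\}$, with the $\tilde X_{\pm\beta}$ chosen so that $\tilde X_{+\beta}^{\dagger}=\tilde X_{-\beta}$ (this exists by the structure theory of semisimple Lie algebras and by the remarks preceding Theorem~\ref{th:charges}), and then define the extensive lifts $X_{\pm\beta}:=\sum_{j=1}^{N}\tilde X_{\pm\beta}^{(j)}$, which are nonzero because each $\tilde X_{\pm\beta}$ is. Since the map $\tilde O\mapsto\sum_{j}\tilde O^{(j)}$ is a Lie-algebra homomorphism, the root relation \eqref{eq:rootvectors} lifts verbatim: $\comm{H}{X_{\pm\beta}}=\sum_{j}\comm{\tilde H}{\tilde X_{\pm\beta}}^{(j)}=\beta(\tilde H)\,X_{\pm\beta}$. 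Hence $X_{\pm\beta}$ is a dynamical symmetry with $\lambda_{\pm\beta}=\pm\beta(\tilde H)=\pm\sum_\alpha c_\alpha\,\beta(\tilde Q_\alpha)$ \emph{provided} this number is real and nonzero, and the relations $X_{+\beta}^{\dagger}=X_{-\beta}$, $\lambda_{-\beta}=-\lambda_{+\beta}$ reproduce the pair structure of Sec.~\ref{Sec:Background}.

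So the whole theorem comes down to choosing the $c_\alpha$ to make every $\lambda_\beta$ real and nonzero at once. Reality is free: since $\tilde Q_\alpha$ is Hermitian, $\ad(\tilde Q_\alpha)$ is self-adjoint with respect to the Hilbert--Schmidt inner product $\langle A,B\rangle:=\tr(A^{\dagger}B)$ on operators, so its eigenvalues $\beta(\tilde Q_\alpha)$ are real for each root $\beta$, and hence $\lambda_\beta=\sum_\alpha c_\alpha\beta(\tilde Q_\alpha)\in\mathbb{R}$ for real $c_\alpha$. For nonvanishing, observe that each root is a nonzero linear functional on the $r$-dimensional real vector space $\mathfrak{t}:=\Span_{\mathbb{R}}\{\tilde Q_\alpha\}$, so the locus $\{\tilde H\in\mathfrak{t}:\beta(\tilde H)=0\}$ is a proper hyperplane; the union of these hyperplanes over the finitely many $\beta\in\Delta$ is nowhere dense, so picking any $(c_1,\dots,c_r)$ outside it --- a ``regular'' element of $\mathfrak{h}$ --- gives $\beta(\tilde H)\neq0$ for every root simultaneously. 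Any such choice yields the desired $H$.

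The one genuinely delicate step is this last one: arranging in a single stroke that all the $\lambda_\beta$ are real (the definition of a dynamical symmetry requires $\lambda\in\mathbb{R}$) and that none is zero (otherwise the corresponding $X_{\pm\beta}$ would be an additional charge, not a dynamical symmetry). Both are settled by the self-adjointness of $\ad$ on Hermitian generators together with the regular-element argument. The rest --- checking that the extensive lift preserves Hermiticity, conservation and the root relations --- is routine bookkeeping once one notes that $\tilde O\mapsto\sum_j\tilde O^{(j)}$ is a homomorphism.
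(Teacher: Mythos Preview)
Your argument is correct and in spirit the same as the paper's, but there are two noteworthy differences. First, the paper takes the more general ansatz $H=H_{\mathfrak{g}}+\sum_\alpha c_\alpha Q_\alpha$, where $H_{\mathfrak{g}}$ commutes with \emph{all} generators of $\mathfrak{g}$ (hence with every root vector), whereas you set $H_{\mathfrak{g}}=0$ and place $H$ entirely inside the Cartan subalgebra. For the bare existence claim your minimal choice is sufficient and cleaner; the paper's extra $H_{\mathfrak{g}}$ term is there for physical reasons, since it is what carries the interactions in the examples (the Heisenberg exchange, the Hubbard hopping and on-site terms, etc.), while the $\sum_\alpha c_\alpha Q_\alpha$ piece plays the role of external fields. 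Second, you are more careful than the paper on the point that actually matters for the definition of a dynamical symmetry: you verify that the eigenvalues $\lambda_\beta=\sum_\alpha c_\alpha\beta(\tilde Q_\alpha)$ are real (via self-adjointness of $\ad(\tilde Q_\alpha)$ in the Hilbert--Schmidt inner product) and arrange them to be simultaneously nonzero by choosing $(c_\alpha)$ regular, i.e.\ off the finite union of root hyperplanes. The paper's proof writes down $\comm{H}{X_{\pm\beta}}=\big(\sum_\alpha c_\alpha\beta(\alpha)\big)X_{\pm\beta}$ and stops, leaving the reality and nonvanishing of the bracket implicit; your treatment fills that gap.
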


\begin{proof}
    Consider a Hamiltonian of the form $H = H_{\mathfrak{g}} + \sum_{\alpha} c_{\alpha} Q_{\alpha}$, where $H_{\mathfrak{g}}$ is a Hamiltonian that commutes with all charges that generate $\mathfrak{g}$ and $c_{\alpha}$ are constants. The charges that generate $\mathfrak{g}$ form a basis for the algebra~\cite{nyh2022how}. Since the root vectors can be written in this basis, $\comm{H_{\mathfrak{g}}}{X_{{\pm}\beta}} = 0$. For any charge and root vector $\comm{Q_{\alpha}}{X_{\pm \beta}} = \beta(\alpha) X_{\pm \beta}$. It follows then that $X_{\pm \beta}$ are dynamical symmetries:
    \begin{equation}
        \comm{H}{X_{\pm \beta}} = \sum_{\alpha} c_{\alpha} \comm{Q_{\alpha}}{X_{\pm \beta}}  = \Big(\sum_{\alpha} c_{\alpha}  \beta(\alpha)\Big) X_{\pm \beta}.
    \end{equation}
\end{proof}

We say a set of charges `can produce' a set of dynamical symmetries $\{A_{\pm \beta}\}$, when $\{A_{\pm \beta}\}$ equals one set of dynamical symmetries that can be found using Theorem \ref{th:DS}. Theorem \ref{th:DS} is for a set of charges that form a $\mathfrak{h}$ of $\mathfrak{g}$. Any set of charges that generate a $\mathfrak{g}$ can be partitioned into $\frac{c}{r}$ sets of mutually commuting charges~\cite{nyh2022how}.

We pause here to clarify one point. Our objects of study are Hamiltonians $H$ that have dynamical symmetries $\{A_{\pm \beta}\}$. We are asking what happens to these dynamical symmetries when we enforce noncommuting charges onto the Hamiltonian. Thus, we first need to identify a charge of the system from its dynamical symmetries. This can be done with Theorem 1, $Q_{\beta} = \comm{A_{+\beta}}{A_{-\beta}}$. We now modify the Hamiltonian such that it conserves charges $\{Q_{\alpha} \}$ that do not commute with each other or the existing charges $\{Q_{\beta}\}$. We then ask what happens to the $\{A_{\pm \beta}\}$ when we make this change. The final Hamiltonian's noncommuting charges will form a Lie algebra. The initial Hamiltonian's dynamical symmetries and original charges will also form a Lie algebra. In both cases, these can be the same Lie algebra, but the Hamiltonians have clearly different charges. This point is further clarified by the examples below.


A simple setting to illustrate this correspondence is the Hubbard model. We choose this model for various reasons. First, it has been shown to demonstrate non-stationary behaviour emerging from dynamical symmetries and the forms of the charges and dynamical symmetries are known~\cite{buvca2019non}. Additionally, its two commuting charges form separate $\mathfrak{h} $'s, unique from the other examples with commuting charges we study. Finally, it is a physically important model---the prototypical model of strongly correlated materials.

Consider a chain of $N$ fermions. Denote by $c_{\sigma}^{(j)\dagger}$ and $c_{\sigma}^{(j)}$ the creation and annihilation operators for a fermion of spin $\sigma$ at lattice site $j$, $\sigma \in \{ \uparrow, \downarrow\}$. Denote by $n^{(j)}_{\sigma} \coloneqq c_{\sigma}^{(j)\dagger} c_{\sigma}^{(j)}$ the number operator for fermions of spin $\sigma$ at lattice site $j$. The 1D Hubbard model's Hamiltonian can be written as~\cite{tasaki1998hubbard}
\begin{align}
    H =& \sum_{j=1}^{N-1} \sum_{\sigma=\uparrow,\downarrow} \Big[ -t\left(c_{\sigma}^{(j)\dagger} c_{\sigma}^{(j+1)} + c_{\sigma}^{(j+1)\dagger} c^{(j)}_{\sigma}\right) + U n_{\uparrow}^{(j)}n_{\downarrow}^{(j)} \nonumber \\ & - \mu \left(n_{\uparrow}^{(j)} + n_{\downarrow}^{(j)}\right) + \frac{B}{2} \left(n_{\uparrow}^{(j)} - n_{\downarrow}^{(j)}\right) \Big],
\end{align}
where $t$ is the hopping amplitude, $U$ is the on-site Coulomb interaction, $\mu$ is the chemical potential, and $B$ is the strength of a constant external magnetic field. The first term describes the kinetic energy of electrons hopping between neighbouring sites. The second term describes the Coulomb repulsion between two electrons on the same site.  The third term adjusts the total number of electrons in the system. The fourth term splits the energy levels of the up-spin and down-spin electrons.

The Hubbard model has two pairs of dynamical symmetries~\cite[Supplementary Materials]{buvca2019non}. The first pair are
\begin{align}
     S_{+ z}^{\rm tot} = \sum_{j = 1}^{L}  c_{\uparrow}^{(j)\dagger} c_{\downarrow}^{(j)} \quad \text{and} \quad  S_{- z}^{\rm tot} = \sum_{j = 1}^{L} c_{\downarrow}^{(j)\dagger} c_{\uparrow}^{(j)},
\end{align}
and the second pair are
\begin{align}
    \eta^{\rm tot}_{+z}  
    = \sum_{j = 1}^{N}  (-1)^{j}c^{(j)\dagger}_{\uparrow}c^{(j)\dagger}_{\downarrow} \; \; \text{and} \; \;
    \eta^{\rm tot}_{-z} 
    = \sum_{j = 1}^{L} (-1)^{j}c_{\downarrow}^{(j)}c_{\uparrow}^{(j)}.
\end{align}
Using theorem \ref{th:charges}, we identify two charges:
\begin{align}
 \comm{ S_{+ z}^{\rm tot}}{ S_{- z}^{\rm tot}} = S_{z}^{\rm tot} &= \sum_{j=1}^{L} \left(n_{\uparrow}^{(j)} - n_{\downarrow}^{(j)}\right), \, \text{and} \\
    \comm{\eta^{\rm tot}_{+z}}{ \eta^{\rm tot}_{-z}} =  \eta_{z}^{\rm tot} &= \sum_{j=1}^{L}(n_{ \uparrow}^{(j)} + n_{\downarrow}^{(j)} - 1).\label{eta}
\end{align}
These charges are the two known charges of the system. 

Starting from the charges, we can also identify the dynamical symmetries. When $B=0$ and $\mu=0$, the Hubbard Hamiltonian contains two sets of charges that generate $\mathfrak{su}(2)$~\cite{jakubczyk20122}.  For $B \neq 0$ and $\mu \neq 0$, the Hubbard model has two sets of charges that generate Cartan subalgebras of $\mathfrak{su}(2)$. Each charge, $S_{z}^{\rm tot}$ and $\eta_{z}^{\rm tot}$, is an element in one of these algebras. We can use these Cartan subalgebras to complete a Cartan-Weyl basis for $\mathfrak{su}(2)$. Doing so, we find  two sets of generators $\{S_{z}^{\rm tot},  S_{+z}^{\rm tot}, S_{-z}^{\rm tot} \}$ and $\{\eta_{z}^{\rm tot},  \eta_{+z}^{\rm tot}, \eta_{-z}^{\rm tot} \}$. This demonstrates how Cartan-Weyl bases can be used to identify the dynamical symmetries from the charges.

\subsection{Noncommuting charges' effect on dynamical symmetries}\label{sec:Effect}

In this subsection, we consider the following setting. We begin with a system that has dynamical symmetries $\{A_{\pm \alpha'}\}$. From Theorem \ref{th:charges}, we can identify its charges $\{Q_{\alpha}\}$, and those charges and dynamical symmetries will form a Lie algebra $\mathfrak{g}$. Furthermore, from Theorem \ref{th:DS}, we know that a Hamiltonian with those charges and dynamical symmetries exists. This system experiences non-stationary dynamics in all observables $\mathcal{O}_i$ that overlap with any elements in $\{A_{\pm \alpha'}\}$. We then introduce one or more charges into the system that do not commute with the existing charges. This introduction of charge(s) removes the existing non-stationary dynamics. Finally, we explain why this effect occurs with noncommuting charges.

First we illustrate this procedure using charges that generate a Cartan subalgebra of $\mathfrak{su}(2)$ and to charges that generate $\mathfrak{su}(2)$. We then do the same analysis for the $\mathfrak{su}(3)$ algebra. These sections further demonstrate the difference in introducing commuting and noncommuting charges. Each of these subsections also presents Hamiltonians that can be used to explore the distinction between commuting and noncommuting charges.

The Hubbard model illustrates that in systems with commuting charges, the dynamical symmetries of each charge can coexist. This section will present another example with two commuting charges and two examples with noncommuting charges. A summary of the examples studied in this work is presented in Table.~\ref{tab:summary}. 

\begin{table*}
    \begin{center}
    \begin{tabular}{@{}lccccc@{}}
    \toprule
     & \begin{tabular}[c]{@{}c@{}}Subalgebra of $\mathfrak{su}(2)$\\ (e.g., XXX model + field)\end{tabular} & \begin{tabular}[c]{@{}c@{}}2 subalgebras of $\mathfrak{su}(2)$\\ (e.g., Hubbard model)\end{tabular} & \begin{tabular}[c]{@{}c@{}}Subalgebra of $\mathfrak{su}(3)$\\ (e.g., Eq.~\eqref{eq:su3eg})\end{tabular} & \begin{tabular}[c]{@{}c@{}}Full $\mathfrak{su}(2)$\\ (e.g., XXX model)\end{tabular} & \begin{tabular}[c]{@{}c@{}}Full $\mathfrak{su}(2)$\\ (e.g., Eq.~\eqref{eq:su3eg})\end{tabular} \\ \midrule
    \begin{tabular}[c]{@{}l@{}}Charges'\\ symmetry\end{tabular} & Abelian & Abelian & Abelian & Non-Abelian & Non-Abelian \\
    Charges & $S_z^{\mathrm{tot}}$ & $S_z^{\mathrm{tot}}, \eta_z^{\mathrm{tot}}$ & $\tau_3^{\mathrm{tot}}, \tau_8^{\mathrm{tot}}$ & $S_{\alpha=x,y,z}^{\mathrm{tot}}$ & $\tau_{\alpha=1,\ldots, 8}^{\mathrm{tot}}$ \\
    \begin{tabular}[c]{@{}l@{}}Dynamical\\ symmetries\end{tabular} & $S_{\pm z}^{\mathrm{tot}}$ & $S_{\pm z}^{\mathrm{tot}}, \eta_{\pm z}^{\mathrm{tot}}$ & $A_{\pm 1}^{\mathrm{tot}}, A_{\pm 2}^{\mathrm{tot}}, A_{\pm 3}^{\mathrm{tot}}$ &  &  \\ \bottomrule
    \end{tabular}
    \caption{\textbf{Summary of the examples studied in this work.} Throughout this work, we study five examples of systems with charges and with or without dynamical symmetries. This table summarizes our main result---that introducing noncommuting charges into systems with dynamical symmetries removes those symmetries. One explanation for this result is that a Cartan-Weyl basis (consisting of commuting charges and dynamical symmetries) and noncommuting charges each form a basis for the algebra being studied. The Hamiltonian, thus, cannot commute with all of the noncommuting charges and not commute with some elements of the Cartan-Weyl basis.}
    \label{tab:summary}
    \end{center}
\end{table*}


Dynamical symmetries always come in pairs, and a corresponding charge exists for every pair (Theorem \ref{th:charges}). In the simplest case, a single charge introduces a U(1) symmetry. Thus, this dynamical symmetry pair and the charge will form an $\mathfrak{su}(2)$ Lie algebra. This is the simplest example to consider. Consider again a system of $N$ sites. We denote by $\sigma_{\alpha}^{\JParen}$ a Pauli operator acting on the $j$th site. We define the components of the spin-$\tfrac{1}{2}$ angular momentum operators as $S_\alpha^\tot \coloneqq \sum_{j = 1}^{\Sites}  \sigma_{\alpha}^\JParen$. The system begins with one charge that forms a Cartan subalgebra of $\mathfrak{su}(2)$. To be concrete, we choose this element to be $S_{z}^\tot$. Next, we complete the Cartan-Weyl basis by identifying the root vectors of the algebra, which are
\begin{equation}
    S_{\pm z}^{\rm tot} = \sum_{j=1}^{N}     \id^{\otimes (j - 1)}  \otimes S_{\pm z}^\JParen  \otimes 
   \id^{\otimes (\Sites - j) } \equiv   \sum_{j = 1}^{\Sites}  S_{\pm z}^\JParen \label{eq:spinladders}
\end{equation}
where $S_{\pm z} = \frac{1}{2} ( \sigma_x  \pm  i  \sigma_y ) $. Like with the Hubbard model, it is straightforward to reverse this procedure. Doing so, we check that $\comm{S_{+z}}{S_{-z}} = S_{z}$, and thus $\comm{S_{+z}^{\rm tot}}{S_{-z}^{\rm tot}} = S_{ z}^{\rm tot}$.

We now introduce another charge that does not commute with the existing one. However, a Hamiltonian that conserves two $\mathfrak{su}(2)$ charges will necessarily conserve all three \cite{nyh2022how}. Thus, we introduce two additional charges into the system. To complete $\mathfrak{su}(2)$, we include $S_{x}^{\rm tot}$ and $S_{y}^{\rm tot}$ as charges. Introducing the charges means that the Hamiltonian now commutes with $S_{\pm x}^{\rm tot}$ and $S_{\pm y}^{\rm tot}$, and thus also commutes with $S_{\pm z}^{\rm tot}$. Together, the three conservation laws eliminate the original dynamical symmetries. When the system had one charge of $\mathfrak{su}(2)$, it had two dynamical symmetries. Now that the system has three charges of $\mathfrak{su}(2)$ it has no dynamical symmetries. This example contrasts with the Hubbard model, where the dynamical symmetries of $S_{\pm z}^{\rm tot}$ could coexist with a charge that commutes with $S_{z}^{\rm tot}$.

Various Hamiltonians exist that can naturally transition from a single charge forming a Cartan subalgebra of $\mathfrak{su}(2)$ to three charges constituting the full $\mathfrak{su}(2)$. An example of this is the Heisenberg model under an external field.
\begin{align}
    H_{2} =&\, \frac{B}{2}\Big(\sum_{j=1}^{N} \sigma_z^{\JParen} \Big) + \frac{J}{2} \Big(\sum_{\langle j, k \rangle } \sum_{\langle\langle j, k \rangle\rangle }  \sigma_x^{(j)}\sigma_x^{(k)} + \sigma_y^{(j)}\sigma_y^{(k)}  \nonumber\\ & + \sigma_z^{(j)}\sigma_z^{(k)}\Big) ,\label{eq:XXXHeis1}
\end{align}
where $\langle j, k \rangle$ indicates the sum over nearest neighbours, $\langle \langle j, k \rangle \rangle$ indicates the sum is over next-nearest neighbours, $B$ is the strength of an external magnetic field, and $J$ is a coupling constant. For $B \neq 0$, the system has one charge corresponding to a Cartan subalgebra of $\mathfrak{su}(2)$ and one pair of dynamical symmetries. By setting $B=0$, we introduce two additional noncommuting charges into the system, thereby removing the dynamical symmetries. We included the next-nearest neighbour interaction to break integrability. Alternatively, we could construct our Hamiltonian using genuine three-body interactions that are SU(2)-symmetric, such as
\begin{align}
    \sigma_x^{(j)}\sigma_y^{(j+1)}\sigma_z^{(j+2)} + \sigma_y^{(j)}\sigma_z^{(j+1)}\sigma_x^{(j+2)} + \sigma_z^{(j)}\sigma_x^{(j+1)}\sigma_y^{(j+2)} \nonumber \\
    - \sigma_z^{(j)}\sigma_y^{(j+1)}\sigma_x^{(j+2)} - \sigma_x^{(j)}\sigma_z^{(j+1)}\sigma_y^{(j+2)} - \sigma_y^{(j)}\sigma_x^{(j+1)}\sigma_z^{(j+2)},
\end{align}
and still break the symmetry with an external field.


In our study, charges can commute in two ways. First, they may be components of distinct algebras, as exemplified by the Hubbard model. Alternatively, they can belong to the same Cartan subalgebra. To demonstrate this second way charges can commute, we turn to $\mathfrak{su}(3)$. $\mathfrak{su}(3)$ has dimension $c= 8$ and rank $r = 2$. Thus, a system with an $\mathfrak{su}(3)$ symmetry has eight charges that can be partitioned into $\frac{c}{r} = 4$ sets of mutually commuting charges. These sets generate Cartan subalgebras. The eight charges of $\mathfrak{su}(3)$ can be represented by the Gell--Mann matrices~\cite{Cahn06Semi}, $\tau_i$ for $i=1$ to $8$. 

We begin with one Cartan sublagebra of $\mathfrak{su}(3)$. For example, take the Cartan subalgebra with $ \tau_3$ and $\tau_8$. In the three-dimensional representation of $\mathfrak{su}$(3), these operators can be represented with
\begin{align}
   \tau_3  = \begin{bmatrix}
   1 & 0 & 0 \\
   0 & -1 & 0 \\
   0 & 0 & 0
   \end{bmatrix}
   \quad \text{and} \quad
   \tau_8  =  \frac{1}{\sqrt{3}}  \begin{bmatrix}
   1 & 0 & 0 \\
   0 & 1 & 0 \\
   0 & 0 & -2
   \end{bmatrix}.
\end{align}
As before, our charges will be sums over these operators on each site, $Q_{1} = \sum_{j=1}^N \tau_{3}^{\JParen}$ and $Q_{2} = \sum_{j=1}^N \tau_{8}^{\JParen}$. Using this Cartan subalgebra, we construct a Cartan-Weyl basis. This requires identifying $\frac{c-r}{2} = 3$ pairs of root vectors.
We define the following operators, $A_{+1} \coloneqq \tau_1 + i \tau_2$, $A_{+2} \coloneqq \tau_4 + i \tau_5$, and $A_{+3} \coloneqq \tau_6 + i \tau_7$. These operators and their Hermitian conjugates are the root vectors. As we did in the previous example, we can construct the dynamical symmetries by taking sums over the operators on each site in our chain, $A_{\pm \beta}^{\rm tot} := \sum_{j=1}^{N} A_{\pm \beta}^{\JParen}$. This is our initial set-up: a system with two commuting charges and six dynamical symmetries. 

Since we begin with three pairs of dynamical symmetries, one may expect we would have three commuting charges. Consider the operators,
\begin{align}
    Q_{1} &= c_{1} \comm{A_{+1}}{A_{-1}} = \frac{1}{\sqrt{2}}
   \begin{bmatrix}
      1 & 0 & 0 \\
      0 & -1 & 0 \\
      0 & 0 & 0
   \end{bmatrix},\\
    Q_{\tilde{2}} &=  c_{2} \comm{A_{+2}}{A_{-2}} = \frac{1}{\sqrt{2}}
   \begin{bmatrix}
      1 & 0 & 0 \\
      0 & 0 & 0 \\
      0 & 0 & -1
   \end{bmatrix}, \, \text{and}\\
   Q_{\tilde{3}} &=  c_{3} \comm{A_{+3}}{A_{-3}} = \frac{1}{\sqrt{2}}
   \begin{bmatrix}
      0 & 0 & 0 \\
      0 & 1 & 0 \\
      0 & 0 & -1
   \end{bmatrix}.
\end{align}
The important point here is that all three are not linearly independent. One can see this from the Killing forms between all pairs not being zero: $(Q_1, Q_{\tilde{2}}) = 3, (Q_1, Q_{\tilde{3}}) = -3$, and $(Q_{\tilde{2}}, Q_{\tilde{3}}) = 3$. A Cartan subalgebra basis will require two charges. Thus, we sum over two of these three operators. We are free to do so in different ways. The choice that recovers the original two operators we considered is summing over charges $Q_{\tilde{2}}$ and $Q_{\tilde{3}}$: $Q_{2} = \frac{1}{\sqrt{3}}\left(Q_{\tilde{2}} + Q_{\tilde{3}}\right)$.

We now want to introduce other charges that do not commute with the existing ones, i.e., more of the charges that generate $\mathfrak{su}(3)$. However, recall the dynamical symmetries for $\tau_3$ and $\tau_8$ are linear combinations of the other six Gell--Mann matrices. If the Hamiltonian now commutes with any elements from other Cartan subalgebras of $\mathfrak{su}(3)$, it will stop some combination of $A_{\pm 1}^{\rm tot}$, $A_{\pm 2}^{\rm tot}$, and $A_{\pm 3}^{\rm tot}$ from being dynamical symmetries. If the Hamiltonian commutes with all of $\mathfrak{su}(3) $'s charges, it will nullify the system's dynamical symmetries. When the system had two commuting charges, it had six dynamical symmetries. Now that it has eight noncommuting charges, it has no dynamical symmetries.

Similar to the $\mathfrak{su}(2)$ example, we can study Hamiltonians for the $\mathfrak{su}(3)$ example. These Hamiltonians are less familiar but can be derived using the procedure in Ref.~\cite{nyh2022how},
\begin{align}\label{eq:su3eg}
    H_3 =&\, \frac{J}{2}\Big(\sum_{\alpha}  \sum_{\langle j, k \rangle } \sum_{\langle\langle j, k \rangle\rangle }  \tau_{\alpha}^{(j)}\tau_{\alpha}^{(k)}\Big) + \frac{B_1}{2} \Big(\sum_j \tau_{3}^{(j)} \Big) \nonumber \\
    & + \frac{B_2}{2} \Big(\sum_j \tau_{8}^{(j)} \Big).
\end{align}
Setting $B_1$ and $B_2$ to zero allows the noncommuting charges to give way to dynamical symmetries.

Finally, we can ask why noncommuting charges remove dynamical symmetries. A Cartan-Weyl basis (consisting of commuting charges and dynamical symmetries) and noncommuting charges each form a basis for the algebras being studied. Thus, the elements of the Cartan-Weyl basis can be expressed in terms of the noncommuting charges, i.e., the dynamical symmetries can be written as a linear combination of the noncommuting charges. Therefore, if the Hamiltonian commutes with more noncommuting charges, it will commute with fewer dynamical symmetries. This is why introducing noncommuting charges leads to the removal of existing dynamical symmetries.

\section{Discussion} \label{Sec:OutLook}

This study adds a key piece to the puzzle of whether noncommuting charges aid or hinder thermalization~\cite{majidy2023non, majidy2024effects, majidy2023noncommuting, kranzl2022experimental, murthy2022non, noh2023eigenstate, siegl2023imperfect, ares2023lack, tabanera2023thermalization, marvian2023theory, dabholkar2024ergodic, upadhyaya2023happens, majidy2023noncommuting, garcia2024estimation}. We found that introducing noncommuting charges into a system can remove its existing dynamical symmetries, thereby removing non-stationary dynamics. Noncommuting charges thus are able to promote thermalization by reducing the number of local observables that thermalize according to the Eigenstate Thermalization Hypothesis. 

Non-Abelian symmetries have now been shown to remove three forms of non-thermalizing behaviour. The main result of this work shows the removal of dynamical symmetries and Ref.~\cite{potter2016symmetry} shows destabilizng MBL. Reference \cite{o2020tunnels} presents a procedure for constructing Hamiltonians with quantum scars by taking a non-Abelian symmetric Hamiltonian and adding a perturbation that breaks the non-Abelian symmetry. Reversing this procedure, one finds a case where introducing introducing non-Abelian symmetries removes quantum scars. Unlike the quantum scar work, our analysis targets local observables and can extend beyond Hamiltonians to Lindbladians since noncommuting charges and dynamical symmetries are relevant in both closed and open systems~\cite{zhang2020stationary, buvca2019non, tindall2020quantum, buvca2019dissipation}. 

One promising opportunity for future work is unifying these three results. One motivation to look into this is that the Hamiltonians found in Ref.~\cite{o2020tunnels} and those studied in this work overlap, suggesting a potential connection. A second reason comes from the entanglement in the systems studied in these three results. Reference~\cite{protopopov2017effect} shows that, in contrast to typical systems where MBL occurs, SU(2) symmetry mandates eigenstates with entanglement exceeding area law. Meanwhile, Reference~\cite{o2020tunnels} finds that breaking SU(2) symmetry removes states with subthermal entanglement entropy. Both of these works align with recent research that indicates non-Abelian symmetries increase entanglement entropy~\cite{majidy2023non, majidy2023critical}. Thus, it seems promising to look into the underlying mechanism causing non-Abelian symmetries to increase entanglement and understand its connection to supressing non-thermalizing phenomena.

A separate direction is to study the effect of breaking non-Abelian symmetries without introducing dynamical symmetries. For example, one could alter the Heisenberg Hamiltonian to break its non-Abelian symmetry by adding a coupling term like $(\sigma_x^{(j)}\sigma_y^{(j+1)} - \sigma_y^{(j)}\sigma_x^{(j+1)})$. If noncommuting charges diminish non-thermalizing dynamics under these conditions, their role in promoting thermalization would be further substantiated. If not, this would suggest that the thermalization-promoting ability of noncommuting charges is limited to certain systems. A third opportunity is to explore dynamical symmetries without any restrictions on locality and study their effect on the thermalization of less local observables, as done in Ref.~\cite{medenjak2020isolated}.

The primary aim of this work is to understand the influence of noncommuting charges on thermalization. Several studies indicate that noncommuting charges obstruct thermalization, as evidenced by deviations from the thermal state~\cite{murthy2022non}, slower entropy production~\cite{manzano2022non}, more constrained dynamics compared to systems with commuting charges~\cite{marvian2022restrictions}, and difficulties in characterizing the thermal state~\cite{nyh2018beyond, nyh2016microcanonical, guryanova2016thermodynamics, lostaglio2017thermodynamic}. Conversely, there is evidence suggesting that noncommuting charges facilitate thermalization, demonstrated by increased average entanglement~\cite{majidy2023non, majidy2023critical} and the elimination of many-body localization, quantum scars, and dynamical symmetries~\cite{potter2016symmetry, protopopov2017effect, o2020tunnels}. One possible resolution is that noncommuting charges impede thermalization, while simultaneously enhancing equilibration by removing obstacles to reaching stationary states. Further work is needed to understand if this is true and to elucidate the nature of the relationship between noncommuting charges and thermalization.

\section{Data Availability}

All materials necessary to support the conclusions of this study are included in the main article.

\bibliography{apssamp}

\begin{acknowledgments}

SM would like to thank Berislav Bu\v{c}a and Abhinav Prem for useful discussions and Nicole Yunger Halpern for asking him a question that ultimately motivated this work---what do dynamical symmetries have to do with noncommuting charges? Further thanks are extended to Matthew Graydon, Raymond Laflamme, Zachary Mann, Jos\'e Polo-G\'omez, Abhinav Prem, and Nicole Yunger Halpern for their comments on the manuscript. This work received support from the Vanier C.G.S. 

\end{acknowledgments}

\end{document}